\newtheorem{conj}{Conjecture}[section]
\newtheorem{thm}{Theorem}[section]
\newtheorem{rmk}[conj]{Remark}
\newtheorem{lem}[conj]{Lemma}
\newtheorem{prop}[conj]{Proposition}
\newtheorem{defn}[conj]{Definition}
\newcommand{\R}{\mathbb{R}} 
\newcommand{\Z}{\mathbb{Z}}
\def \O{{\mathcal{O}}}
\begin{document}
\title{Error Bounds on a Mixed Entropy Inequality} 

% %%% Single author, or several authors with same affiliation:
% \author{%
%   \IEEEauthorblockN{Stefan M.~Moser}
%   \IEEEauthorblockA{ETH Zürich\\
%                     ISI (D-ITET)\\
%                     CH-8092 Zürich, Switzerland\\
%                     Email: moser@isi.ee.ethz.ch}
% }

%%% Several authors with up to three affiliations:
\author{%
  \IEEEauthorblockN{James Melbourne, Saurav Talukdar, Shreyas Bhaban and Murti V. Salapaka}
  \IEEEauthorblockA{University of Minnesota,
                    Minneapolis, USA \\
                    Email: \{melbo013, taluk005, bhaba001, murtis\}@umn.edu}
 % \and
 % \IEEEauthorblockN{Albus Dumbledore and Harry Potter}
 % \IEEEauthorblockA{Hogwarts School of Witchcraft and Wizardry\\
 %                   Hogwarts Castle\\ 
 %                   1714 Hogsmeade, Scotland\\
 %                   Email: \{dumbledore, potter\}@hogwarts.edu}
}

%%% Many authors with many affiliations:
% \author{%
%   \IEEEauthorblockN{Albus Dumbledore\IEEEauthorrefmark{1},
%                     Olympe Maxime\IEEEauthorrefmark{2},
%                     Stefan M.~Moser\IEEEauthorrefmark{3}\IEEEauthorrefmark{4},
%                     and Harry Potter\IEEEauthorrefmark{1}}
%   \IEEEauthorblockA{\IEEEauthorrefmark{1}%
%                     Hogwarts School of Witchcraft and Wizardry,
%                     1714 Hogsmeade, Scotland,
%                     \{dumbledore, potter\}@hogwarts.edu}
%   \IEEEauthorblockA{\IEEEauthorrefmark{2}%
%                     Beauxbatons Academy of Magic,
%                     1290 Pyrénées, France,
%                     maxime@beauxbatons.edu}
%   \IEEEauthorblockA{\IEEEauthorrefmark{3}%
%                     ETH Zürich, ISI (D-ITET), ETH Zentrum, 
%                     CH-8092 Zürich, Switzerland,
%                     moser@isi.ee.ethz.ch}
%   \IEEEauthorblockA{\IEEEauthorrefmark{4}%
%                     National Chiao Tung University (NCTU), 
%                     Hsinchu, Taiwan,
%                     moser@isi.ee.ethz.ch}
% }

\maketitle

%%%%%%
%% Abstract: 
%% If your paper is eligible for the student paper award, please add
%% the comment "THIS PAPER IS ELIGIBLE FOR THE STUDENT PAPER
%% AWARD." as a first line in the abstract. 
%% For the final version of the accepted paper, please do not forget
%% to remove this comment!
%%
\begin{abstract}
   Motivated by the entropy computations relevant to the evaluation of decrease in entropy in bit reset operations, the authors investigate the deficit in an entropic inequality involving two independent random variables, one continuous and the other discrete.  In the case where the continuous random variable is Gaussian, we derive strong quantitative bounds on the deficit in the inequality.  More explicitly it is shown that the decay of the deficit is sub-Gaussian with respect to the reciprocal of the standard deviation of the Gaussian variable.  What is more, up to rational terms these results are shown to be sharp.
\end{abstract}

%% The paper must be self-contained. However, if you are referring to
%% a full version for checking certain proofs, please provide the
%% publically accessible location below.  If the paper is completely
%% self-contained, you can remove the following line from your
%% submission.
%% \textit{A full version of this paper is accessible at: \url{http://www.isit2018.org/}}

\section{Introduction}
Gaussian mixtures are an interesting class of probability distributions arising in a multitude of disciplines like machine learning \cite{bilmes1998gentle, rasmussen2006gaussian}, signal processing\cite{kostantinos2000gaussian}, thermodynamics of information \cite{parrondo2015thermodynamics} and many more. Hence, entropy of Gaussian mixtures is of great importance. However, analytical expression for the entropy of Gaussian mixtures is not available and researchers sometimes resort to numerical approximations as a substitute \cite{michalowicz2008calculation}. 

In this article we present sharp bounds on the entropy of Gaussian mixtures. We arrive at Gaussian mixtures as the density of the sum of a continuous and discrete random variable denoted by $X$ and $Z$ respectively, with $X$ being Gaussian and independent of $Z$. It is shown later that the density of $X+Z$ is a mixture of Gaussian densities. The mixed random variable $X+Z$ is to be interpreted as- for each discrete value taken by $Z$, there is an associated Gaussian density around it, that is, if we observe $X+Z$, we will obtain the discrete values corrupted with Gaussian noise and it is indeed reasonable to assume that the Gaussian noise associated with a particular realization of $Z$ is independent of the value taken by $Z$. A realistic example of this is in intracellular transport, where nano-molecular machines referred as molecular motors \cite{schliwa2003molecular} transport important \lq cargoes\rq \ inside the cell from one location to another. Kinesin, a type of molecular motor, is known to transport cargo in discrete steps of $8\ nm$ (nano meter) predominantly, also $4 \ nm $ and $12 \ nm$ occasionally \cite{yildiz2004kinesin}. Due to physical scale of operation, the motion of kinesin takes place in the presence of Brownian motion, and hence, any observation of the discrete kinesin displacement is corrupted by a Gaussian; independent of the step size of kinesin as shown in Figure \ref{fig:kinesinsteps}. Another example is unfolding events of the domains of proteins, which are discrete events, but the force (in the pico-Newton range) at which the various domains unfold are not deterministic due to the influence of Brownian motion \cite{carrion1999mechanical}. The above two applications have challenging signal processing and inference problems of relevance to domain scientists and hence, entropy of $X+Z$ is an important quantity to understand.        

\begin{figure}
	\centering
	\includegraphics[scale = 0.3]{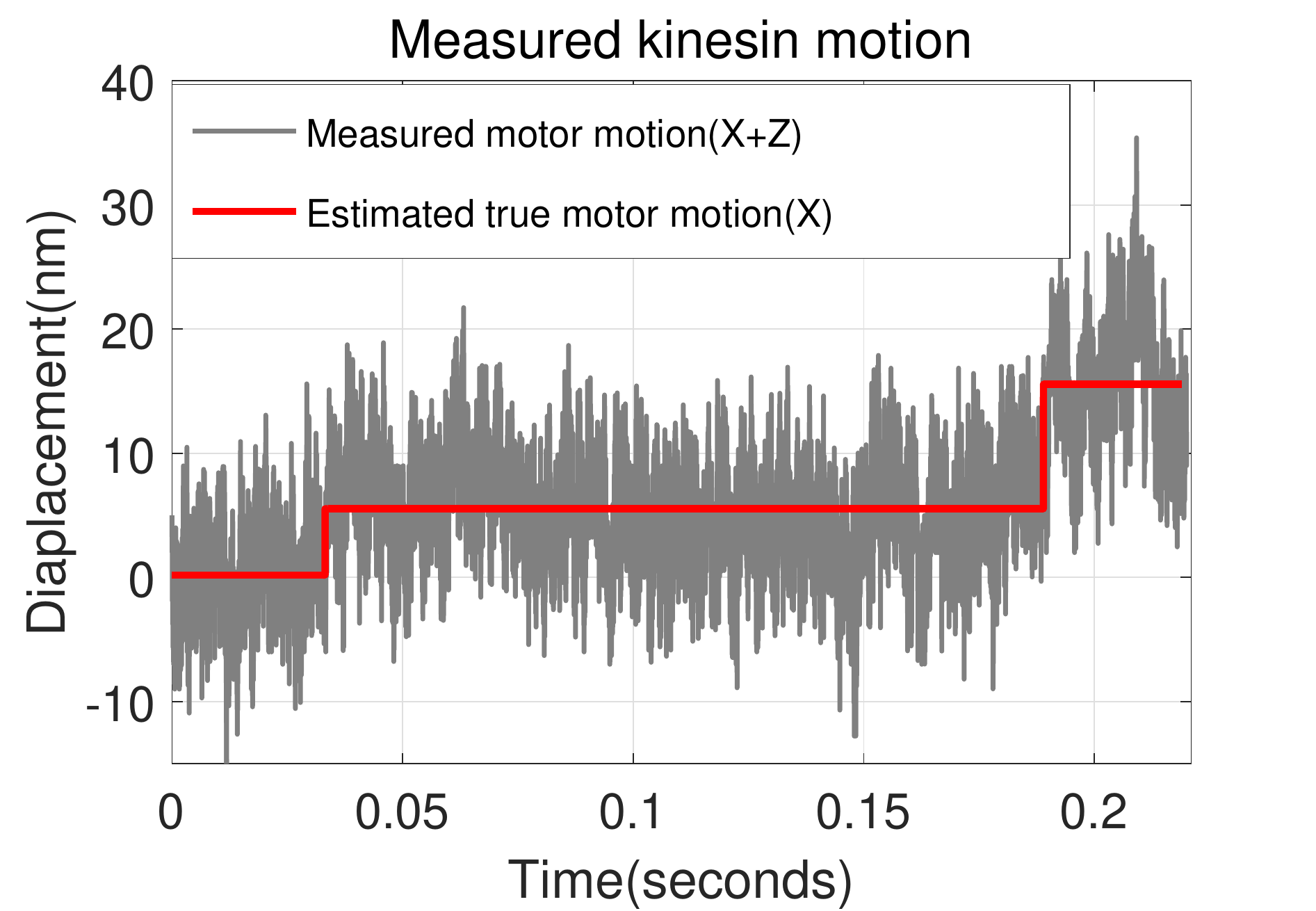} 
	\vspace{-0.2 cm}
	\caption{Experimentally measured motion of kinesin(grey) and the estimated true motion of kinesin(red) inferred from the measured motion \cite{aggarwal2012detection}.}
	\label{fig:kinesinsteps}
	%\vspace{-0.4 cm}
\end{figure}

The abstraction of a mixed random variable $X+Z$ has intimate connections with the state of a single bit memory and its entropy has fundamental links to Information Theory. In this case, $Z$ has two possible values, and, hence, is a Bernoulli random variable. The physical dimension of a single bit memory is in the nanometer regime, where thermal fluctuations (Brownian motion) play a key role in the device physics. The most commonly used description of the physics of a single bit memory is a particle in a double well potential, where, a barrier separates the two wells as shown in Figure \ref{fig:symmetric_well}, under the influence of Brownian motion. If the particle is in the right/left well the state of the memory can be considered to be one/zero respectively.  Most often, the probability distribution of the particle in either well is given by a Gaussian distribution \cite{chiuchiu2015conditional}. Henceforth, we assume that the probability distributions of the particle in the left and right well are $f_0(x):=\mathcal{N}(-\mu,\sigma^2)$ and $f_1(x):=\mathcal{N}(\mu,\sigma^2)$ respectively, where $\mathcal{N}(\mu,\sigma^2)$ denotes a Normal distribution function with mean $\mu$ and variance $\sigma^2$. It is equally likely for the discrete variable, $Z$, to be zero or one, that is, $P(Z=0) = P(Z=1)=\frac{1}{2}$ and $P$ denotes the probability measure. The probability of finding the Brownian particle between $x$ and $x+dx$ is given by, 
\begin{equation}\label{eq:pdf_i}
    \begin{aligned}
      &P(Z=0)P(X\in (x,x+dx)|Z=0) \\
        &+ P(Z=1)P(X\in (x,x+dx)|Z=1)\\
         &= \frac{1}{2}f_0(x)dx + \frac{1}{2}f_1(x)dx.
         \end{aligned}     
\end{equation} 
 Thus, the probability distribution function of the particle representing a single bit memory, $f_s(x)$ is an equally weighted mixture of $f_0(x)$ and $f_1(x)$. Of particular interest is the reset operation of a bit, where, irrespective of the information stored in the memory is one or zero, the outcome is zero.  Thus, after applying a reset protocol to the memory bit, $P(Z=0)=1, P(Z=1)=0$. Then, the probability density function of the particle after undergoing the reset operation, $f_e(x)=f_0(x)$. It is seen that there is a decrease in entropy(thermodynamic as well as Shannon) of a memory bit when undergoing a reset operation. This necessitates dissipation of heat, which results in the Landauer's principle linking information processing with thermodynamic costs \cite{landauer1961irreversibility}. It states that successful reset of $1$ bit of information stored in a memory is always accompanied by $k_BT\ln{2}$ amount of heat dissipation. Understanding of Landauer's principle involves computation of entropy differences between $f_s(x)$ and $f_e(x)$, which are usually modeled as Gaussian mixtures.

\begin{figure}
	\centering
	\includegraphics[scale = 0.3]{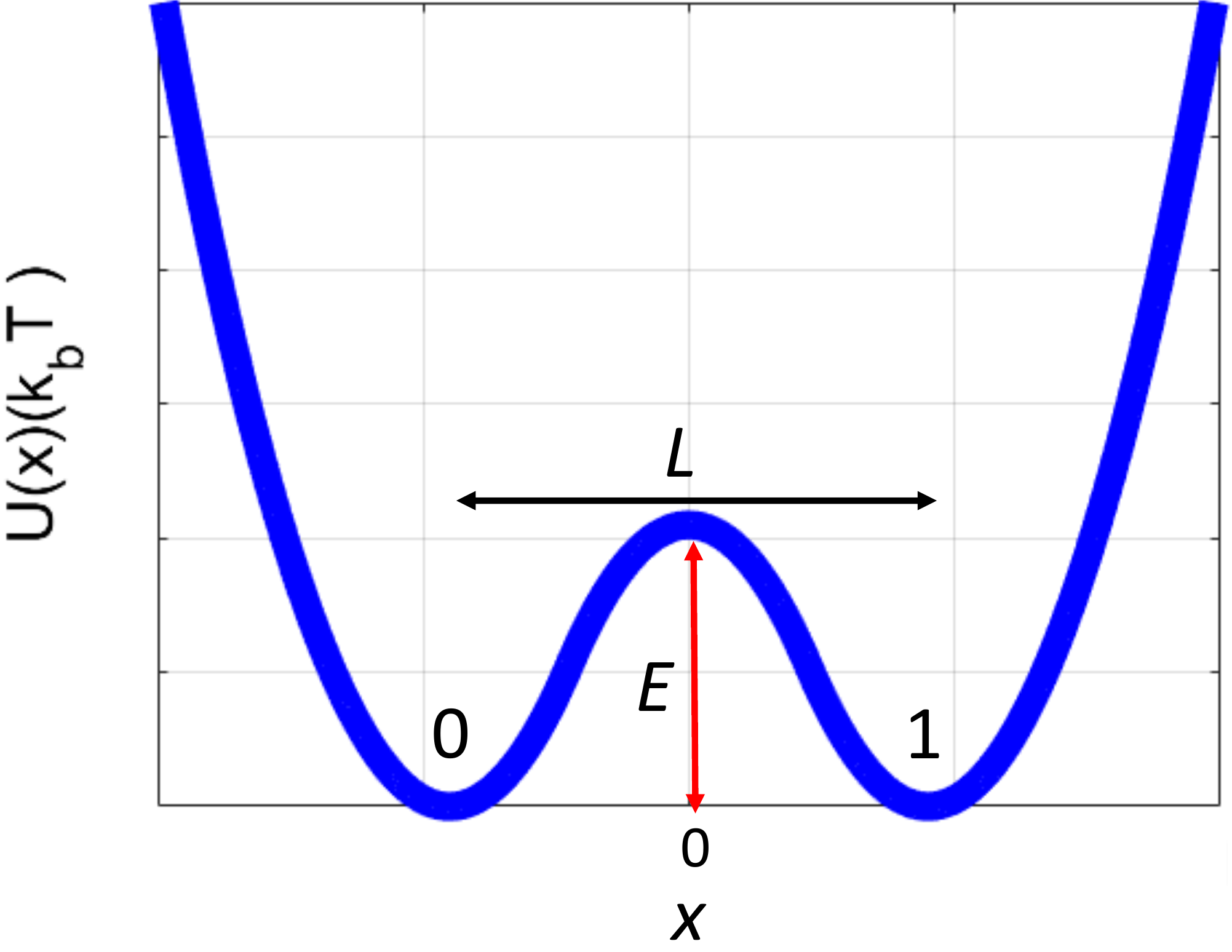} 
	\vspace{-0.2 cm}
	\caption{Schematic for a double well potential associated with a sinle bit memory. The x-axis denotes the position and the y-axis denotes the energy of the particle. The two wells are separated by distance $L$ and a barrier height of $E$. Particle in right well ($x>0$) represents logic state $1$ and particle in left well ($x<0$) represents logic state $0$ \cite{talukdar2017beating}.}
	\label{fig:symmetric_well}
	\vspace{-0.4 cm}
\end{figure}

%We will show later that these mixture of Gaussian distributions can be obtained in the special case of $X$ being Bernoulli random variable (Bernoulli($0.5$) for $f(x)$ and Bernoulli($p$) for $g(x)$) while $Z$ being Normal random variable, with $X$ and $Z$ being independent. Thus, the sum $X+Z$ and the associated entropic inequalities have deep connections to the field of thermodynamics of computation. 

Motivated from the above discussion and due to the lack of analytical expression for Gaussian mixtures, in this article we will study bounds on the entropy of $X+Z$, where $Z$ is a discrete random variable (not necessarily Bernoulli random variable) and $X$ being a Gaussian random variable independent of $Z$. Due to the pervasiveness of such distributions, and the usefulness of understanding their entropy, there is significant but disjoint literature on the topic. The interested reader can find related investigations in \cite{moshksar2016arbitrarily,huber2008entropy, kolchinsky2017estimating, michalowicz2008calculation}. We believe the interpretation of this problem as the deficit in an entropic inequality to be novel. 

\subsection{Contribution}
This article presents sharp bounds on the entropy of the sum of two independent random variables, $h(X+Z)$, with $X$ being a Gaussian random variable independent of $Z$ with $Z$ being a discrete random variable. While $H(X) + h(Z)$ is a trivial upper bound for $h(X+Z)$ (see for example \cite{WM2014beyond} where it is an immediate corollary of Lemma $11.2$), where, $H(Z)$ is the discrete entropy and $h(X)$ denote the entropy of a continuous random variable, our efforts sharpen this upper bound in the case of $X$ being a Gaussian and $Z$ being a discrete random variable independent of $X$. In particular, we explicitly characterize the gap between $h(X+Z)$ and $H(X)+h(Z)$. The application of these bounds in the case of thermodynamics of resetting a bit can be found in \cite{talukdar2018analyzing}. 

\subsection{Organization}
In the next section we present some definitions and preliminaries for the discussion of bounds on $h(X+Z)$, following which, in Section III, we present the bounds on $h(X+Z)$ when $X$ is Gaussian. In Section IV we show that these derived bounds are sharp, followed by Conclusion and Future Work in Section V and VI respectively. 

\section{Background}
We first present the notion of information entropy for discrete and continuous random variables.  The reader can consult \cite{CT91:book} for general background on information theory, and \cite{MMX17:1} for recent developments in entropic inequalities.
\begin{defn}
For an integer valued random variable $Z$  with the probability mass function, $P(Z = i)=p_i$, we denote the usual Shannon entropy in ``nats'' as,
\begin{equation} \label{eq:discrete entropy}
    H(Z) = - \sum_{i=-\infty}^{\infty} p_i \ln p_i.
\end{equation}

For a random variable $X$ with density $f(x)$ on $\mathbb{R}$,  whenever $f \ln f \in L^1$, we denote the entropy in the usual manner as,
\begin{equation} \label{eq:continuous entropy}
    h(X) = - \int_{-\infty}^{\infty} f(x) \ln f(x) dx.
\end{equation}
\end{defn}

\noindent In this article $\displaystyle \sum_{i=-\infty}^{\infty}a_i$ will be denoted as $ \displaystyle \sum_{i}a_i$, $\displaystyle \sum_{i=-\infty,i\neq 0}^{\infty}b_i$ as $\displaystyle \sum_{i\neq 0}b_i$. We will suppress notation at times when the meaning of expressions is clear from context.  We utilize $P (A)$ to denote the probability of an event $A$.

First a comment on the nature of $X+Z$ for independent $Z \sim p$ and $X \sim f$, $X+Z \sim f_{X+Z}$. As mentioned, when $X$ is Gaussian, $X+Z$ can be interpreted as a Gaussian mixture, but so long as $X$ has a density, $X+Z$ does as well. 
\[
    f_{X+Z} (x) = \sum_{k} p_k f(x-k),
\]
as for a Borel set $A$,
\begin{equation*}
    \begin{split}
         \int_A \sum_{k} p_k f(x-k) dx
            &=
                \sum_k p_k \int_A f(x-k) dx
                    \\
            &=
                \sum_k P(Z=k) P(X \in -k + A)
                    \\
            &=
                \sum_k P(Z=k, X \in -k +A)
                    \\
            &=
                P(X+Z \in A).
    \end{split}
\end{equation*}
 Thus, the notation $h(X+Z)$ is well defined in the following Proposition.
%We are thus studying a very special case of a Gaussian mixture distribution, which would generally take the form,
%\[
%    \sum_{k} p_k f_k(x),
%\]
%with $f_k$ being a multivariate Gaussian $\mathcal{N}(\mu_k, C_k)$.  Computing the entropy of these more general distributions is a challenging and important problem. 

\begin{prop} \label{prop:mixed entropy inequality} For a $\mathbb{Z}$ valued random variable $Z$, and $X$ independent of $Z$ and taking values in $\mathbb{R}$,
\begin{align}\label{eqn:h(X+Z)}
    h(X+Z) = H(Z) + h(X) - \delta(X,Z),
\end{align}
where,
\begin{equation} \label{eq: delta definition}
\begin{split}
    \delta(&X,Z)  
        =  
            \\
        &\sum_k p_k \int f(x-k) \ln \left( 1 + \frac{ \sum_{j \neq k} p_j f(x - j)}{p_k f(x-k)} \right) dx,
\end{split}
\end{equation}
and satisfies
\begin{align} \label{eq: delta bigger than 0}
    \delta(X,Z) \geq 0.
\end{align}
\end{prop}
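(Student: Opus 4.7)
The plan is to substitute the explicit density
$$f_{X+Z}(x) = \sum_k p_k f(x-k)$$
(established in the paragraph preceding the statement) directly into the definition $h(X+Z) = -\int f_{X+Z}(x)\ln f_{X+Z}(x)\,dx$. For each integer $k$, on the set where $p_k f(x-k) > 0$, I would factor $p_k f(x-k)$ out of the argument of the logarithm to write
$$\ln \sum_j p_j f(x-j) = \ln p_k + \ln f(x-k) + \ln\Bigl(1 + \tfrac{\sum_{j \neq k} p_j f(x-j)}{p_k f(x-k)}\Bigr).$$
Multiplying by $p_k f(x-k)$, summing over $k$, and integrating, the three pieces become (i) $-H(Z)$, after using $\int f(x-k)\,dx = 1$; (ii) $-h(X)$, after the translation $x \mapsto x+k$ followed by $\sum_k p_k = 1$ (this is where translation invariance of differential entropy is used); and (iii) exactly $\delta(X,Z)$ as defined in the statement. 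Together these produce $h(X+Z) = H(Z) + h(X) - \delta(X,Z)$.

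For non-negativity, the quickest route is pointwise: each ratio $\sum_{j\neq k} p_j f(x-j)/(p_k f(x-k))$ is nonnegative, so every summand in $\delta(X,Z)$ is an integral of a nonnegative integrand against a nonnegative weight. I would prefer, however, to package this as the more informative identity that recognizes $\delta(X,Z)$ as a \emph{conditional Shannon entropy}. Writing $Y := X+Z$, Bayes' rule gives
$$\pi_k(y) := P(Z = k \mid Y = y) = \frac{p_k f(y-k)}{f_Y(y)},$$
and a short calculation of
$$H(Z \mid Y) = -\int f_Y(y) \sum_k \pi_k(y) \ln \pi_k(y)\,dy$$
cancels the $f_Y(y)$ factor and returns exactly $\delta(X,Z)$. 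Non-negativity follows at once from non-negativity of discrete conditional entropy, and one reads off that the ``deficit'' is precisely the residual uncertainty about $Z$ given the noisy observation $X+Z$; in particular $\delta(X,Z) = 0$ iff $Z$ is almost surely determined by $X+Z$.

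The only delicate point is justifying the term-by-term manipulations. On the set where $p_k f(x-k) = 0$ the corresponding summand is taken to be zero under the convention $0 \ln 0 = 0$, making the apparent division by zero harmless. The interchange of sum and integral for the $\ln(1+\cdot)$ piece is legal by Tonelli's theorem applied to the nonnegative integrands, and the $\ln p_k$ and $\ln f(x-k)$ contributions converge under the standing hypotheses that $H(Z)$ and $h(X)$ are finite. The main conceptual step is the choice of how to factor the logarithm; everything else is bookkeeping.
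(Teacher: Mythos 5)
Your proposal is correct and follows essentially the same route as the paper: factor $\ln\sum_j p_j f(x-j)$ into $\ln p_k + \ln f(x-k) + \ln\bigl(1+\sum_{j\neq k}p_j f(x-j)/(p_k f(x-k))\bigr)$, integrate term by term to recover $-H(Z)$, $-h(X)$ (via translation invariance), and $\delta(X,Z)$, with non-negativity read off pointwise. Your additional identification $\delta(X,Z)=H(Z\mid X+Z)$ is a correct and genuinely informative bonus the paper does not state, giving the equality case ($\delta=0$ iff $Z$ is a.s.\ determined by $X+Z$) for free.
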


We take by convention $0 \ln 0$ as its continuous limit $0$, and implicitly consider the integral in the computation of $h(X+Z)$ to be taken only over $x$ such that $f_{X+Z}(x) \ln f_{X+Z}(x) > 0$.  The careful reader will notice that this precludes the possibility of division by zero in what follows. 

\begin{proof}

The entropy of $X+Z$ can be computed as follows,
\begin{align}
    h(X+Z) 
        =
            - &\int f_{X+Z} \ln f_{X+Z} dx,\nonumber
                \\
        =
            - &\int \sum_k p_k f(x-k) \ln \left( \sum_j p_j f(x-j) \right) dx,\nonumber
                \\
        =
            - & \sum_k p_k \int f(x-k) \ln f(x-k) dx ,\nonumber
                \\
         &-\sum_k p_k \ln p_k - \delta(X,Z),
\end{align}
where, using the translation invariance of entropy 
\[
    h(X)=-\int f(x-k)\ln f(x-k)dx,\text{for any}\ k.
\] 
 Thus, it follows that  \eqref{eqn:h(X+Z)} is satisfied.  Equation \eqref{eq: delta bigger than 0} follows immediately from \eqref{eq: delta definition}.
\end{proof}
\begin{rmk}
The inequality in the above Proposition is trivially sharp when one takes $Z$ to be a point mass. A slightly more interesting case of equality is when $X$ is supported in the interval $(-\frac 1 2, \frac 1 2)$.
\end{rmk}
%Motivated by entropy calculations in the experimental testing of Landauer's inequality, 
 We now set out to bound $\delta (X,Z)$, and we begin with the substitution $y = x-k$, so that our error term, $\delta(X,Z)$ is expressed as, 
\begin{align}\label{eq:A delta bound inside of Remark}
     \sum_k p_k \int f(y) \ln \left( 1 + \frac{ \sum_{j \neq k} p_j f(y+k - j)}{p_k f(y)} \right) dy.
\end{align}
Bringing the sum inside the integral
and
%\[
%      \int f(y) \left(\sum_k p_k \ln \left( 1 + \frac{ \sum_{j \neq k} p_j f(y+k - j)}{p_k f(y)} \right) \right) dy.
%\]
applying Jensen's inequality %\cite{CT06:book} 
to the concavity of logarithm we have for every $y \in \mathbb{R}$,
\begin{equation*}
    \begin{split}
       \sum_k p_k \ln \left( 1 + \frac{ \sum_{j \neq k} p_j f(y+k - j)}{p_k f(y)} \right)
            \\
            \leq
                 \ln \left( 1 + \frac{ \sum_k p_k \sum_{j \neq k} p_j f(y+k - j)}{p_k f(y)} \right),
    \end{split}
\end{equation*}
    
So that \eqref{eq:A delta bound inside of Remark} can be bounded above by
\begin{equation*}
    \begin{split}
            \int f(y) &\ln \left( 1 + \frac{ \sum_{k} \sum_{j \neq k} p_j f(y+k-j)}{f(y)} \right) dy. 
    \end{split}
\end{equation*}
Changing the order of summation leads to,
\begin{align*}
  %          \int f(y) &\ln \left( 1 + \frac{ \sum_j p_j \sum_{k \neq j} f(y+k-j)}{f(y)} \right) dy 
%                \\
 %       &=
             \int f(y) \ln \left( 1 + \frac{ \sum_{m \neq 0}  f(y+m)}{f(y)} \right) dy.
\end{align*}
Let us collect these observations, which hold in generality in the following.

\begin{lem} \label{lem:general error bound}
    If $Z \sim p$ is a $\mathbb{Z}$ valued random variable and $X \sim f$ is a continuous random variable with $X$ and $Z$ independent and having bounded entropy, then the error term, 
    \[
        \delta(X,Z) := H(Z) + h(X) -h(X+Z),
    \]
    in addition to being non-negative satisfies,
    \[
        \delta(X,Z) \leq \int f(y) \ln \left( 1 + \frac{ \sum_{m \neq 0}  f(y+m)}{f(y)} \right) dy.
    \]
\end{lem}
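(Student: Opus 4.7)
The non-negativity is immediate from Proposition~\ref{prop:mixed entropy inequality}, so the substance of the proof lies in the upper bound. The plan is to start from the explicit formula
\begin{equation*}
    \delta(X,Z) = \sum_k p_k \int f(x-k) \ln \left( 1 + \frac{ \sum_{j \neq k} p_j f(x - j)}{p_k f(x-k)} \right) dx,
\end{equation*}
given by Proposition~\ref{prop:mixed entropy inequality}, and massage the integrand using Jensen's inequality.

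First I would make the translation $y = x - k$ in each summand; since $f$ is a density on $\mathbb{R}$ this is harmless, and it puts every term of the sum on a common integration variable, against the common weight $f(y)$. The $k$-dependence now sits entirely inside the argument of the logarithm, so I can push the sum over $k$ inside the integral and view it as a convex average with weights $p_k$.

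Next, I would apply Jensen's inequality to the concavity of $\ln$, obtaining for each fixed $y$ the pointwise bound $\sum_k p_k \ln(1 + a_k) \leq \ln(1 + \sum_k p_k a_k)$ with $a_k = \frac{\sum_{j\neq k} p_j f(y+k-j)}{p_k f(y)}$. A convenient cancellation occurs: the averaging weight $p_k$ absorbs the $p_k$ in the denominator, leaving an integrand of the form $\ln\bigl(1 + \frac{\sum_k \sum_{j \neq k} p_j f(y + k - j)}{f(y)}\bigr)$. I would then reindex the double sum via $m = k - j$ (ranging over nonzero integers), noting that for fixed $m \neq 0$ the inner sum over $j$ contributes $\sum_j p_j f(y+m) = f(y+m)$ since $p$ is a probability mass function. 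Collapsing this to $\sum_{m \neq 0} f(y+m)$ yields the claimed upper bound.

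The main technical obstacle is justifying the interchange of sum and integral and ensuring the Jensen step is legitimate almost everywhere. Nonnegativity of the integrand makes Tonelli the right tool for the former, while for the latter one needs $\sum_{j \neq k} p_j f(y+k-j) < \infty$ for almost every $y$, which follows from $\int \sum_{j\neq k} p_j f(y+k-j)\,dy = 1 - p_k < \infty$. The bounded-entropy hypothesis guarantees that both $h(X+Z)$ and the right-hand side of the claimed bound make sense, so no further delicate estimates are needed.
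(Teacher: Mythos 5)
Your proposal is correct and follows essentially the same route as the paper: translate by $y = x - k$, apply Jensen's inequality with weights $p_k$ to the concave logarithm (with the $p_k$ in the denominator cancelling), and reindex the double sum via $m = k - j$ using $\sum_j p_j = 1$ to obtain $\sum_{m \neq 0} f(y+m)$. Your added Tonelli and almost-everywhere finiteness remarks are sound justifications that the paper leaves implicit.
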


We would like to evaluate this deficit $\delta(X,Z)$, when $Z$ is a Gaussian and $X$ is a discrete random variable independent of $Z$. %This will enable us to derive sharper bounds on $h_f$ and $h_g$, and hence, the difference, $h_f-h_g$. 

\section{Bounds in the Gaussian Case}

Here we derive explicit bounds on $\delta(X,Z) $, when $X$ is a mean zero Gaussian random variable with $\sigma \leq 1/2$ and density,
\begin{align}\label{eq:gaussiandensity}
f(x)=\frac{1}{\sqrt{2\pi\sigma^2}}e^{-x^2/2\sigma^2}.
\end{align}
Under the assumption $\sigma < 1/2$ we will show that $h(X)+H(Z)$ is a \lq good\rq \ approximation of $h(X+Z)$. We will show in the next section, that this approximation is \lq poor\rq \ when $\sigma \geq 1/2$.  The bounds are derived by splitting the integral into two pieces.  We bound the segment with $y$ close to zero in Lemma \ref{lem:bounds close to zero}, for large $y$, we will make use of the following lemma.

\begin{lem} \label{lem:Sum of integer points of a log-concave distribution}
If $f$ is the Gaussian density as described in (\ref{eq:gaussiandensity}) with $\sigma \leq \frac 1 2$, then, 
\[
    \sum_{m \in \mathbb{Z}} f(\varepsilon + m) < \frac 1 \sigma,
\]
for any $\varepsilon \in \mathbb{R}$.
\end{lem}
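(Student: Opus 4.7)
Set $S(\varepsilon) := \sum_{m \in \mathbb{Z}} f(\varepsilon + m)$. Since $S$ is periodic in $\varepsilon$ with period $1$, I may assume $|\varepsilon| \le 1/2$. The strategy is to dominate $S(\varepsilon)$ by $S(0) = \sum_m f(m)$, and then estimate $S(0)$ via the exponential decay of the Gaussian.

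For the domination, I would split
$$S(\varepsilon) = f(\varepsilon) + \sum_{m \ge 1} \bigl[ f(\varepsilon + m) + f(\varepsilon - m) \bigr].$$
Because $f$ is even and decreasing on $[0,\infty)$, $f(\varepsilon) \le f(0)$. For each $m \ge 1$ and $|\varepsilon| \le 1/2$, both $|\varepsilon \pm m|$ are at least $m - 1/2$, so monotonicity of $f$ on the positive axis gives $f(\varepsilon + m) + f(\varepsilon - m) \le 2 f(m - 1/2)$. A slicker reduction uses Poisson summation: the Fourier coefficients $\widehat f(k) = e^{-2 \pi^2 k^2 \sigma^2}$ are strictly positive, whence $S(\varepsilon) \le \sum_k \widehat f(k) = S(0)$ directly.

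For the estimate, the assumption $\sigma \le 1/2$ gives $1/(2\sigma^2) \ge 2$, so
$$\frac{(m-1/2)^2}{2\sigma^2} \ge 2(m-1/2)^2 = \tfrac{1}{2} + 2m(m-1).$$
Since $2m(m-1) \ge 4(m-1)$ for every $m \ge 1$ (the difference is $2(m-1)(m-2) \ge 0$), I would obtain
$$\sigma \sqrt{2\pi}\, S(\varepsilon) \le 1 + 2 e^{-1/2} \sum_{m \ge 1} e^{-4(m-1)} = 1 + \frac{2 e^{-1/2}}{1 - e^{-4}},$$
which is numerically $\approx 2.24$, strictly less than $\sqrt{2\pi} \approx 2.51$. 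Dividing through by $\sigma \sqrt{2\pi}$ yields $S(\varepsilon) < 1/\sigma$.

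The only delicate step is justifying the reduction to $\varepsilon = 0$; once that is in hand, the tail estimate is routine and leaves ample slack. I expect the Poisson summation route to be cleanest if one may invoke the Fourier transform of the Gaussian, but the symmetry and monotonicity argument above is self-contained and avoids any Fourier analysis.
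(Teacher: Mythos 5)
Your proof is correct and follows essentially the same route as the paper's: reduce to $|\varepsilon|\le \frac 1 2$ by periodicity and symmetry, dominate the lattice sum by a geometric series using $\sigma\le \frac 1 2$ (the paper compares consecutive terms via $f(\varepsilon+n+1)\le e^{-2}f(\varepsilon+n)$ and doubles the one-sided sum, while you pair $\pm m$ and pass to $f(m-\frac 1 2)$ by monotonicity first), and finish with a small numerical comparison against $\sqrt{2\pi}$. The Poisson-summation aside is valid but unnecessary, since your monotonicity bound already covers all $|\varepsilon|\le \frac 1 2$.
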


\begin{proof}
   % The Gaussian's rapid decay away from $0$, ensures $\sum f(\varepsilon+m)$ can't be much larger than its maximum value.  
    Without loss of generality\footnote{Since $\varepsilon = n + \varepsilon'$ for some integer $n$ and $|\varepsilon'| \leq 1/2$, a change of parameters allows us to assume $|\varepsilon| \leq 1/2$, and by the symmetry of $f$ it follows that 
    $
        \sum_m f(\varepsilon +m) = \sum_m f(-\varepsilon + m).$
    }, let $0 \leq \varepsilon \leq \frac 1 2$.   Straight forward computations using an upper-bounding geometric series and some numerical approximations will then achieve the lemma. 
    For $n \geq 0$ it is straightforward to obtain, 
    $
        e^{-(\varepsilon + n+1)^2/2\sigma^2} \leq e^{-2} e^{-(\varepsilon + n)^2/2 \sigma^2}.
    $
    By iterating this result, and using a geometric series bound along with $e \geq 2.7$, we obtain,
    \[
        \sum_{n=0}^\infty f(\varepsilon + n) \leq \frac 7 6 e^{-\varepsilon^2/\sigma^2}/\sqrt{2 \pi \sigma^2}.
    \]
    By noticing the initial indices of the two summations we have
    \[
        \sum_{k=1}^\infty f(\varepsilon -k) \leq \sum_{n=0}^\infty f(\varepsilon + n),
    \]
    which implies,
    \[
        \sum_{m} f(\varepsilon + m) \leq \frac 7 3 e^{-\varepsilon^2/\sigma^2}/\sqrt{2 \pi \sigma^2} \leq \frac 7{3 \sigma \sqrt{2 \pi}}.
    \]
    Using the approximation $\pi \geq 3$, we obtain,
    \[
         \sum_{m} f(\varepsilon + m) \leq \frac 1 \sigma.
    \]
    
\end{proof}

\iffalse
Let $\O$ be a measurable set and let $\delta_\O(X,Z)$ denote the portion of the deficit integrated over the set $\O$, that is 
\[
    \delta_\O(X,Z) = \sum_{p_k >0 } p_k \int_{\{f>0\} \cap \O} f(y) \ln \left( 1 + \frac{ \sum_{j \neq k} p_j f(y + k -j )}{p_k f(y)} \right) dy .
\]
\fi

\begin{lem} \label{lem:bounds close to zero}
For a continuous random variable $X$ with density $f$, satisfying $f(-x)=f(x)$, 
\begin{equation*}
    \begin{split}
\int_{-\frac{1} 2}^{\frac 1 2} f(y) \ln \left( 1 + \frac{ \sum_{m \neq 0}  f(y+m)}{f(y)} \right) dy \\
\leq  2 \int_{\frac 1 2}^\infty f(y) dy.
    \end{split}
\end{equation*}
\end{lem}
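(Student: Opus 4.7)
The plan is to bound the integrand pointwise using the elementary inequality $\ln(1+u) \leq u$ for $u \geq 0$, then swap the integral with the (countable) sum and exploit the symmetry $f(-x)=f(x)$ to collapse the sum into a single tail integral.

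First I would apply $\ln(1+u) \leq u$ with $u = \sum_{m \neq 0} f(y+m)/f(y)$, which is nonnegative since $f \geq 0$. The $f(y)$ in the denominator then cancels with the $f(y)$ factor in front, giving
\[
\int_{-1/2}^{1/2} f(y) \ln\left(1 + \frac{\sum_{m\neq 0} f(y+m)}{f(y)}\right) dy \leq \int_{-1/2}^{1/2} \sum_{m\neq 0} f(y+m)\, dy.
\]
Since the summand is nonnegative I can interchange sum and integral (Tonelli), and then perform the change of variables $u = y+m$ in each term to obtain
\[
\sum_{m \neq 0} \int_{m-1/2}^{m+1/2} f(u)\, du.
\]

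Next I would split the sum according to the sign of $m$. Using the symmetry hypothesis $f(-x) = f(x)$, the substitution $u \mapsto -u$ shows that the contributions from $m \leq -1$ exactly match those from $m \geq 1$. The positive-$m$ contributions telescope into a single tail:
\[
\sum_{m \geq 1} \int_{m-1/2}^{m+1/2} f(u)\, du = \int_{1/2}^{\infty} f(u)\, du,
\]
since the intervals $[m-1/2, m+1/2]$ for $m \geq 1$ partition $[1/2, \infty)$. Multiplying by $2$ yields the desired bound.

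There is no real obstacle here: the only subtlety is verifying that the partition $\{[m-1/2, m+1/2] : m \geq 1\}$ tiles $[1/2, \infty)$ (which is immediate) and justifying the interchange of sum and integral (which follows from nonnegativity via Tonelli's theorem). The symmetry of $f$ is used exactly once, to equate the negative-$m$ and positive-$m$ halves of the sum.
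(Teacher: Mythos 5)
Your proposal is correct and follows essentially the same route as the paper: apply $\ln(1+u)\leq u$, interchange sum and integral, observe that the shifted intervals tile $\{|y|>\tfrac12\}$, and use the symmetry of $f$ to write the result as $2\int_{1/2}^{\infty} f(y)\,dy$. The only difference is that you spell out the tiling and the Tonelli justification explicitly, which the paper leaves implicit.
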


\begin{proof}
    Using the inequality $\ln (1+x) \leq x$ for $x \geq 0$, we have,
    \begin{equation*}
        \begin{split}
          \int_{-\frac 1 2}^{\frac 1 2} f(y) \ln \left( 1 +  \frac{  \sum_{m \neq 0} f(y + m)}{f(y)} \right) dy
            \\
                \leq
           \int_{-\frac 1 2}^{\frac 1 2} \sum_{m \neq 0} f(y + m)  dy.
        \end{split}
    \end{equation*}
    After exchanging the summation and integral, the right hand side is,
    \[
        \int_{\{|y|> \frac 1 2 \}} f(y) dy.
    \]
    By using symmetry of the density, we have our result.
\end{proof}

\begin{lem} \label{lem: bounds away from zero}
    If $f$ is the Gaussian density as described in (\ref{eq:gaussiandensity}) with $\sigma < \frac 1 2$, then,
    \begin{equation*}
    \begin{split}
        \int_{|y| > \frac 1 2} f(y) \ln \left( 1 +  \frac{  \sum_{m \neq 0} f(y + m)}{f(y)} \right) dy \\ \leq \frac{f(1/2)}{2} + 5 \int_{\frac 1 2}^\infty f(y) dy.
    \end{split}
    \end{equation*}

\end{lem}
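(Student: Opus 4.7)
The plan is to exploit the factorization $1 + R(y) = T(y)/f(y)$, where $R(y) := \sum_{m\neq 0} f(y+m)/f(y)$ is the ratio inside the logarithm and $T(y) := \sum_{m\in\mathbb{Z}} f(y+m)$ is the full integer-shifted sum. This rewrite is essential because the naive bound $\ln(1+x) \leq x$ that sufficed in Lemma \ref{lem:bounds close to zero} is hopeless on $\{|y| > 1/2\}$: when $|y|$ is near a nonzero integer and $\sigma$ is small, $R(y)$ is exponentially large, and a term-by-term swap would yield $\sum_{m \neq 0}\bigl(1 - \int_{m-1/2}^{m+1/2} f\bigr)$, which diverges. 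By contrast, after the factorization $\ln(1+R) = \ln T - \ln f$, the first piece is tamed uniformly by Lemma \ref{lem:Sum of integer points of a log-concave distribution}.

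Carrying this out, Lemma \ref{lem:Sum of integer points of a log-concave distribution} gives $T(y) \leq 1/\sigma$ for every $y$, so by symmetry of $f$,
\[
\int_{|y|>1/2} f(y)\ln T(y)\,dy \leq -2(\ln\sigma) \int_{1/2}^\infty f(y)\,dy.
\]
For the remaining $-\int_{|y|>1/2} f\ln f\,dy$, I would use the explicit Gaussian form $-\ln f(y) = \frac{1}{2}\ln(2\pi\sigma^2) + y^2/(2\sigma^2)$ and handle the second moment by integration by parts based on $yf(y) = -\sigma^2 f'(y)$, which gives $\int_{1/2}^\infty y^2 f(y)\,dy = \frac{\sigma^2}{2}f(1/2) + \sigma^2 \int_{1/2}^\infty f(y)\,dy$. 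Thus
\[
-\int_{|y|>1/2} f\ln f\,dy = \ln(2\pi\sigma^2)\int_{1/2}^\infty f\,dy + \frac{f(1/2)}{2} + \int_{1/2}^\infty f\,dy.
\]

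Adding the two contributions and observing that $\ln(2\pi\sigma^2) - 2\ln\sigma = \ln(2\pi)$ causes the $\ln\sigma$ terms to cancel, yielding
\[
\int_{|y|>1/2} f(y)\ln(1+R(y))\,dy \leq \frac{f(1/2)}{2} + \bigl(1 + \ln(2\pi)\bigr) \int_{1/2}^\infty f(y)\,dy.
\]
Since $1+\ln(2\pi) \approx 2.84 < 5$, the claimed inequality follows. The only real obstacle is spotting the factorization $1+R = T/f$ that brings Lemma \ref{lem:Sum of integer points of a log-concave distribution} into play; once in hand, the rest is routine Gaussian bookkeeping, and the slack between $1+\ln(2\pi)$ and $5$ simply produces a cleaner numerical constant in the statement.
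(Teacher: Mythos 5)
Your proof is correct, and it rests on the same two pillars as the paper's: Lemma \ref{lem:Sum of integer points of a log-concave distribution} to control the shifted sum uniformly by $1/\sigma$, and the explicit Gaussian form of $-\ln f$ handled by the same integration by parts $\int_{1/2}^\infty y^2 f\,dy = \sigma^2\bigl(\tfrac12 f(1/2) + \int_{1/2}^\infty f\bigr)$. The only difference is how the additive $1$ inside the logarithm is absorbed: the paper observes that on $\{|y|>1/2\}$ the ratio $R(y)$ is at least $1$ (the nearest integer to $y$ is nonzero), writes $\ln(1+R)\le \ln 2 + \ln R$, and bounds the numerator of $R$ by $1/\sigma$, which costs an extra $2\ln 2$ and yields the constant $1+\ln 2\pi + 2\ln 2 \approx 4.22 \le 5$; your identity $1+R = T/f$ lets Lemma \ref{lem:Sum of integer points of a log-concave distribution} swallow the $m=0$ term as well, so no $R\ge 1$ observation is needed and you land at the slightly sharper $1+\ln 2\pi \approx 2.84$, still within the stated bound. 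So your route is a clean, mildly improved variant of the paper's argument rather than a genuinely different one.
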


\begin{proof}
\iffalse
    For large $y$,
    First observe that for $|y| > \frac 1 2$, there exists an integer $n \neq 0$, such that $|y-n| \leq \frac 1 2$, hence, 
    \[
        \sum_{m \neq 0} f(y+m) \geq f(y),
    \]
    which implies,
    \[
        1 + \frac{\sum_{m \neq 0} f(y+m)}{f(y)} 
            \leq
        2\frac{\sum_{m \neq 0} f(y+m)}{f(y)}.
    \]
    Hence,
        \begin{equation*}
            \begin{split}
            \int_{|y| > \frac 1 2}& f(y) \ln \left( 1 +  \frac{  \sum_{m \neq 0} f(y + m)}{f(y)} \right) dy
                \\
                &\leq 
                    \int_{|y| > \frac 1 2} f(y) \left(\ln 2 + \ln \left(  \frac{  \sum_{m \neq 0} f(y + m)}{f(y)} \right)\right) dy.
            \end{split}
         \end{equation*}
\fi
    Using Lemma \ref{lem:Sum of integer points of a log-concave distribution}, to bound $\sum_{m \neq 0} f(y+m)/f(y)$, gives \iffalse
    that $$\ln \left( 1 + \sum_{m \neq 0} {f(y+m)}{f(y)} \right)$$ can be controlled by a term of the form $$C + C' y^2.$$  Integrating the resulting expression by parts yields our result.
\fi
    \begin{equation*}
        \begin{split}
                \int_{|y| > \frac 1 2} f(y) \ln \left(\frac{  \sum_{m \neq 0} f(y + m)}{f(y)} \right) dy
                    \\
            \leq
                \int_{|y| > \frac 1 2} f(y) \ln \left( \sqrt{2 \pi} e^{y^2/2 \sigma^2} \right) dy.
        \end{split}
    \end{equation*}
    After integrating by parts the right hand side of the above can be computed exactly as,
    \begin{equation*}
        \begin{split}
                &2 \int_{ \frac 1 2}^\infty f(y) \left(\ln \sqrt{ 2 \pi} + \frac{y^2}{2 \sigma^2}\right) dy
                    \\
            &=
                \frac{f(1/2)}{2} +  \int_{\frac 1 2}^\infty (1 + 2 \ln \sqrt{2 \pi})f(y) dy.
        \end{split}
    \end{equation*}
    Compiling all of the above and using the numerical approximation $1 + \ln 2 \pi+2\ln 2 \leq 5$ we have our result.
\end{proof}

\begin{thm} \label{thm:main result}
    If $Z$ is an integer valued random variable and $X$ is an independent Gaussian with density as described in (\ref{eq:gaussiandensity}) with $\sigma < \frac 1 2$, then, 
    \[
        0 \leq h(Z) + H(X) - h(X+Z) \leq \frac{e^{-1/8\sigma^2}} {\sqrt{2\pi}} \left( \frac 1 {2\sigma} + 7 \right).
    \]
\end{thm}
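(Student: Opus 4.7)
The plan is to combine the three lemmas already established. The lower bound $0 \leq h(Z)+H(X)-h(X+Z)$ is immediate from Proposition~\ref{prop:mixed entropy inequality}. For the upper bound, I would begin by invoking Lemma~\ref{lem:general error bound} to reduce matters to controlling
\[
    I := \int f(y) \ln\!\left(1 + \frac{\sum_{m\neq 0} f(y+m)}{f(y)}\right) dy,
\]
and then split this integral at the natural threshold $|y| = 1/2$, since that is precisely the dividing point built into Lemmas~\ref{lem:bounds close to zero} and~\ref{lem: bounds away from zero}.

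Applying Lemma~\ref{lem:bounds close to zero} (which needs only the symmetry of the Gaussian density) to the piece over $[-1/2,1/2]$ gives a bound of $2\int_{1/2}^\infty f(y)\,dy$, while Lemma~\ref{lem: bounds away from zero} bounds the piece over $\{|y|>1/2\}$ by $f(1/2)/2 + 5\int_{1/2}^\infty f(y)\,dy$. Summing these yields
\[
    \delta(X,Z) \leq \frac{f(1/2)}{2} + 7 \int_{1/2}^\infty f(y)\,dy.
\]
At this point $f(1/2)/2$ is already exactly $\frac{1}{2\sigma\sqrt{2\pi}}e^{-1/8\sigma^2}$, which accounts for the $\frac{1}{2\sigma}$ term in the claimed bound.

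The remaining step is to massage the Gaussian tail integral $\int_{1/2}^\infty f(y)\,dy$ into the desired form. Here I would apply the standard Mill's-ratio bound $\int_t^\infty e^{-y^2/2\sigma^2}\,dy \leq (\sigma^2/t)\,e^{-t^2/2\sigma^2}$ at $t=1/2$, which gives
\[
    \int_{1/2}^\infty f(y)\,dy \leq \frac{2\sigma}{\sqrt{2\pi}}\,e^{-1/8\sigma^2}.
\]
Because the hypothesis $\sigma < 1/2$ implies $2\sigma < 1$, this tail is at most $\frac{1}{\sqrt{2\pi}}e^{-1/8\sigma^2}$, and multiplying by $7$ produces the second term in the theorem. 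Assembling the two pieces yields the claimed constant $\frac{1}{2\sigma}+7$ multiplying $\frac{e^{-1/8\sigma^2}}{\sqrt{2\pi}}$.

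The only genuinely substantive work in this proof is hidden in the earlier lemmas — especially Lemma~\ref{lem:Sum of integer points of a log-concave distribution}, which is what allows the logarithm to be tamed on $\{|y|>1/2\}$. Once those lemmas are in hand, the main obstacle here is essentially bookkeeping: being careful that the constants from the Mill's-ratio estimate and the hypothesis $\sigma<1/2$ line up so that the final coefficient is exactly $\frac{1}{2\sigma}+7$ rather than something larger. No new analytic idea is needed beyond this.
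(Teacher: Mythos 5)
Your proposal is correct and follows essentially the same route as the paper: reduce via Lemma~\ref{lem:general error bound}, split at $|y|=1/2$, apply Lemmas~\ref{lem:bounds close to zero} and~\ref{lem: bounds away from zero} to get $\frac{f(1/2)}{2}+7\int_{1/2}^\infty f(y)\,dy$, and then bound the Gaussian tail by $e^{-1/8\sigma^2}/\sqrt{2\pi}$ using $\sigma<1/2$. Your Mill's-ratio step is just a cosmetic variant of the paper's substitution $w=y/\sigma$ together with $e^{-w^2/2}\leq w\,e^{-w^2/2}$ for $w\geq 1/(2\sigma)>1$, so the two arguments coincide in substance.
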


\begin{proof}
    As we have seen in Lemma \ref{lem:general error bound}, $\delta(X,Z)$ defined as, 
    \[
        \delta(X,Z) = H(X) + h(Z) - h(X+Z),
    \]
    is non-negative and satisfies, 
    \[
        \delta(X,Z) \leq  \int f(y) \ln \left( 1 + \frac{ \sum_{m \neq 0}  f(y+m)}{f(y)} \right) dy.
    \]
    We can bound the right hand side above by splitting the integral into two pieces and applying Lemmas \ref{lem:bounds close to zero} and \ref{lem: bounds away from zero} as follows,  
    \begin{align} \label{eq:expression as Gaussian tails}
       % \begin{split}
            \int_{-\frac{1}{2}}^{\frac{1}{2}} &+ \int_{|y| > \frac 1 2} f(y) \ln \left( 1 + \frac{ \sum_{m \neq 0}  f(y+m)}{f(y)} \right) dy
               \nonumber, \\
            &\leq 
                2 \int_{\frac 1 2}^\infty f(y) dy + \left( \frac{f(1/2)}{2} + 5 \int_{\frac{1}{2}}^\infty f(y) dy\right)
                  \nonumber,  \\
            &=
                \frac{f(1/2)}{2} + 7 \int_{\frac{1}{2}}^\infty f(y) dy.
        %\end{split}
    \end{align}
    Using the substitution $w = y/\sigma$ (observe that the assumption $\sigma < \frac 1 2$ ensures $ \frac 1 {2 \sigma} > 1$) we have,
    \begin{align*}
        \sqrt{2 \pi} \int_{\frac 1 2}^\infty f(y) dy 
            &=
                \int_{\frac 1 {2\sigma}}^\infty e^{-w^2/2} dw, 
                     \\
            &\leq
                \int_{\frac 1 {2\sigma}}^\infty w e^{-w^2/2} dw, 
                    \\
            &=
                e^{-1/8\sigma^2}.
    \end{align*}
   Moreover, $f(1/2) = e^{-1/8\sigma^2}/\sqrt{2\pi} \sigma$. Combining the above computations with (\ref{eq:expression as Gaussian tails}) we have,
    \[
        \delta(X,Z) \leq e^{-1/8 \sigma^2} \left( \frac{(2\sigma)^{-1} + 7}{ \sqrt{2 \pi}}  \right).
    \]
\end{proof}

 \section{Sharpness of Bounds} \label{sec: Sharpness of bounds}
 We now show that the bound derived in the previous section is tight and cannot be improved significantly.  Consider the discrete random variable $Z$ to be a Bernoulli$(1/2)$ which we denote by $B$, then,
 \[
    f_{X+B}(x) = \frac 1 {\sqrt{8 \pi \sigma^2}} \left( e^{-x^2/2\sigma^2} + e^{-(x-1)^2/2 \sigma^2}\right).
 \]
 
 Using the symmetry of the Gaussian, $\delta(B,Z)$ reduces to the following in the fair Bernoulli case.
 
 \begin{equation*}
    \begin{split}
    H(B) + &h(X) - h(X+B)
        \\
        \iffalse=&
     \int_\mathbb{R} \frac{e^{-x^2/2 \sigma^2}}{\sqrt{2 \pi \sigma^2}} \ln \left( 1 +  e^{-(x^2 -(x-1)^2)/2 \sigma^2} \right) dx 
            \\ \fi
        =&
            \int_\mathbb{R} \frac{e^{-x^2/2 \sigma^2}}{\sqrt{2 \pi \sigma^2}} \ln \left( 1 +  e^{\frac{x}{\sigma^2} -\frac{1}{2 \sigma^2}} \right) dx.
     \end{split}
 \end{equation*}
 Substituting $y = x /\sigma$, we obtain the following expression which is immediately bounded.
 \begin{equation} \label{eq: big sigma bad}
    \int_\mathbb{R} \frac{e^{-y^2/2}}{\sqrt{2 \pi}} \ln \left( 1 + e^{\frac y \sigma - \frac 1 {2 \sigma^2}}\right) dy
 \geq 
    \ln(2) \int_{\frac 1 {2\sigma}}^\infty \frac{e^{-y^2/2}}{\sqrt{2\pi}} dy.
 \end{equation}
 
 Observe that in the case $\sigma \geq \frac 1 2$ this demonstrates a lower bound on the error growing with $\sigma$.  In particular, $X=B$ is an example of \lq large deficit\rq \ for all $\sigma \geq \frac 1 2$.  
 
 We proceed forward with the assumption that $\sigma < \frac 1 2$, and use conventional bounds on Gaussian tails to show the sharpness of our upper bounds on $\delta(X,Z)$.
 
 Recall for the standard normal, $f(y) = \frac{e^{-y^2/2}}{\sqrt{2\pi}}$ when $z >0$
 \begin{equation} \label{eq:Lower bounds on Gaussian tails}
    \int_z^\infty f(y) dy \geq f(z)\left(\frac 1 z - \frac 1 {z^3}\right).
 \end{equation}
 This follows from the equation, $f(y) = - f'(y)/y$ and application of integration by parts twice.  Applying \eqref{eq:Lower bounds on Gaussian tails} with $z = \frac{1}{2 \sigma}$ we have,
 \begin{equation} \label{eq:Bernoulli lower bounds}
    \delta(X,B) \geq \frac{e^{-1/8\sigma^2}}{\sqrt{2\pi}} \left( \ln(2)\left( 2 \sigma  - 8\sigma^3 \right) \right).
 \end{equation}

 \section{Conclusion}
 We have shown the deficit in the inequality $h(X+Z) \leq H(Z)+h(X)$ has Gaussian decay with $1/\sigma$.  What is more,
up to the polynomial and rational terms in $\sigma$, the lower bound on $\delta(B,Z)$ derived in equation \eqref{eq:Bernoulli lower bounds} matches the upper bounds derived for general $\delta(X,Z)$.  As such, Theorem \ref{thm:main result} can be considered sharp with small scope of improvement.  Additionally \eqref{eq: big sigma bad} gives a quantitative example of large deficit when $\sigma \geq \frac 1 2$.

 \section{Future Work}
 We remark that the placement of the discrete random variable on $\Z$ is more a product of convenience than necessity. Similar derivations are possible for more general discrete variable on $\R$,  the distance between  values of $Z$ relative to the strength of the noise $X$ will remain pertinent. % This framework, which applies to a discrete signal corrupted by Gaussian noise, is a natural information theoretic framework to investigate and should have further and more general application.
 %Additionally 
 It is of interest to study the error term in the case of non Gaussian random variables, in particular to attempt the generalization of these results to log-concave random variables. Extensions and further applications of the results here will be the topic of a subsequent article \cite{Melbourne2018logconcavedeficit}.
 
\section{Acknowledgement}
 We would like to thank Prof. Mokshay Madiman, University of Delaware and Prof. Arnab Sen, University of Minnesota for initial discussion about the problem. The authors acknowledge the support of the National Science Foundation for funding the research under Grant No. CMMI-1462862, CNS 1544721 and the first author acknowledges support from NSF Grant No. 1248100.  Portions of this article and related results were announced at the March Meeting of the American Physical Society  \cite{talukdar2018mixture, melbourne2018mixed}.
 
 %were kept general ($Z$ not necessarily Gaussian but possessing some similar properties) as much as possible as understanding the deficit in more general settings is of immediate interest, and should be a topic if future work by the authors.  Indeed imperfections in potentials created by optical tweezer instruments used in Landauer's  bound experimental evaluation, motivate answering similar questions for more general random variables.  
\bibliographystyle{plain}
\bibliography{bibibi}

\end{document}